%% file: main.tex
\documentclass[a4paper, 11pt]{article}

\usepackage{graphicx} 
\usepackage{pgf}
\usepackage{times} 
\usepackage{stmaryrd}

\usepackage{amsmath} 
\usepackage{amssymb}  
\usepackage{amsthm}
\usepackage{cite}

\title{Stabilization of finite-energy grid states of a quantum harmonic oscillator by reservoir engineering with two dissipation channels}

\author{Rémi Robin$^{1}$ \and Pierre Rouchon$^{1}$ \and Lev-Arcady Sellem$^{2}$
\thanks{*This project has received funding from the European Research Council (ERC) under the European Union’s Horizon 2020 research and innovation programme (grant agreement No. [884762]) and Plan France 2030 through
the project ANR-22-PETQ-0006..}
\thanks{$^{1}$Laboratoire  de Physique de l’Ecole normale sup\'{e}rieure, Mines Paris-PSL, Inria,  ENS-PSL, Universit\'{e} PSL, CNRS, Sorbonne Universit\'{e},  Paris, France.}%
\thanks{$^{2}$Institut Quantique and Département de Physique, Université de Sherbrooke, Sherbrooke J1K 2R1 QC, Canada.}%
}

\input{symbolsmacros.tex}

\begin{document}

\maketitle

\begin{abstract}

We propose and analyze an experimentally accessible Lindblad master equation for a quantum harmonic oscillator, simplifying a previous proposal to alleviate implementation constraints. It approximately stabilizes periodic grid states introduced in 2001 by Gottesman, Kitaev
and Preskill (GKP), with applications for quantum error correction and quantum metrology.
We obtain explicit estimates for the energy of
the solutions of the Lindblad master equation.
We estimate the convergence rate to the codespace when stabilizing a GKP qubit, and numerically study the effect of noise.
We then present simulations illustrating how a modification of parameters allows preparing states of metrological interest in steady-state.

\end{abstract}
\section{INTRODUCTION}
Bosonic codes, such as
the binomial
\cite{MichaelSilveriBrierleyEtAl2016},
cat \cite{cochraneMacroscopicallyDistinctQuantumsuperposition1999} or Gottesman-Kitaev-Preskill (GKP) \cite{gottesmanEncodingQubitOscillator2001} code,
aim at drastically alleviating the hardware requirements of quantum error correction (QEC). In a nutshell, they introduce the redundance necessary for error correction by encoding information in exotic states in the infinite-dimensional Hilbert space of a single harmonic oscillator, instead of using the exponentially large Hilbert space of a collection of qubits.
In particular, several experiments in superconducting circuits
\cite{campagne-ibarcqQuantumErrorCorrection2020,NeeveGKP2022,sivakRealtimeQuantumError2023a,brockQuantumErrorCorrection2024,lachance-quirion_2024}
and trapped ions \cite{fluhmannEncodingQubitTrappedion2019,NeeveGKP2022,matsosRobustDeterministicPreparation2023}
demonstrated generation and stabilization of GKP states. They rely on discrete-time control protocols using an auxiliary qubit to control the state of a harmonic oscillator. This strategy is typically limited by the propagations of errors affecting the ancilla to the encoded qubit.
Several methods have been proposed were the codespace of a GKP qubit is instead autonomously stabilized in continuous time by carefully engineering its coupling to a strongly damped environment
\cite{sellemDissipativeProtectionGKP2025,sellemStabilityDecoherence2023,sellemExponentialConvergenceDissipative2022a,nathanSelfCorrectingGottesmanKitaevPreskillQubit2025,geierSelfcorrectingGKPQubit2024}, promising order of magnitude improvements in the achievable logical lifetimes.
These theory proposals typically place stringents constraints on hardware parameters and/or control capabilities and have yet to be demonstrated experimentally.

Here, we leverage a symetry in the GKP code to introduce a simplification of the protocol proposed in \cite{sellemDissipativeProtectionGKP2025}, and study the performance of this new protocol both for the stabilization of a qubit and for the generation of GKP states for quantum metrology.
We obtain \emph{a priori} estimates on the solutions of the proposed Lindblad equation, in the form of explicit energy bounds. We then exploit the periodicity of so-called stabilizers operators that define the GKP subspace to relate the convergence to that subspace to spectral properties of a singular differential operator on periodic functions, and obtain explicit estimates on its low-lying spectrum.
Finally, we turn to numerical simulations to study the robustness of the stabilization in presence of photon loss.
\\

The paper is organized as follows. Sec. \ref{sec-defs} gathers basic definitions and notations.
Sec. \ref{sec-main-eq} presents the proposed simplification to the stabilizing dynamics studied in \cite{sellemDissipativeProtectionGKP2025},
as well as numerical simulations suggesting it is sufficient to stabilize GKP states. We then turn  to mathematically characterizing this stabilization: in Sec. \ref{sec-stab}, we derive \emph{a priori} estimates on solutions; in  Sec. \ref{sec-log}, we derive explicit convergence rates of logical observables defining the codespace in the case of a GKP qubit; and in Sec. \ref{sec-noise} we study the impact of noise in the form of photon loss.
In Sec. \ref{sec-qunaught} we study a slight modification of parameters that allows stabilizing a single state instead of the two-dimensional state space of a qubit.
In Sec. \ref{sec-implem} we shortly discuss possible physical platforms for implementing this stabilizing dynamics.
The Appendix  provides additional details on calculations omitted in the main text.

\section{Gottesman-Kitav-Preskill spaces}
\label{sec-defs}
Denote $\oq$ and $\op$ the position and impulsion operators of a quantum harmonic oscillator, satisfying $[\oq, \op] = i$.
From Glauber identity, for any $\eta>0$ such that $d := \eta^2/ 2\pi \in \mathbb N$, the periodic operators $e^{\pm i\eta \oq}$ and $e^{\pm i\eta \op}$ commute, and generate the so-called stabilizer group $\{ e^{i m \eta \oq} e^{i n \eta \op}\, , (m,n) \in \mathbb Z^2\}$.
Recall that in the $q$-representation, $\op = -i \frac{d}{dq}$ so that $e^{\pm i\eta p}$ is a translation operator on $q$. Then, for any wavefunction $|\psi\rangle = (\psi(q))_{q\in\mathbb R}$, we have:
\begin{equation}
    e^{ im \eta \oq} e^{i n \eta \op} |\psi\rangle =
    \left( e^{i m \eta q} \psi(q+n\eta) \right)_{q\in\mathbb R}.
    \label{eq-trans-sol}
\end{equation}
Solving for $+1$ eigenstates of Eq. \eqref{eq-trans-sol}, one formally defines the GKP subspace of a qudit as the $d$-dimensionnal joint eigenspace of the stabilizers associated to the eigenvalue $+1$, spanned by the Dirac combs
$\psi_k(q) := \sum_{n\in\mathbb Z} \delta(q-(nd+k)\frac{2\pi}\eta)$,
where $k\in \{0,\ldots,d-1\}$ and $\delta$ stands for the Dirac distribution.
In particular, when $d=2$, one can encode a qubit in the subspace spanned by the two GKP states, protected from the effect of local noise by the distance between the supports of the two logical states in phase space.
When $d=1$, on the other hand, the GKP subspace is $1$-dimensional (the corresponding Dirac comb is also known as a qunaught); while it cannot encode logical information, it has been considered for metrological applications
\cite{Duivenvoorden2017,valahuQuantumenhancedMultiparameterSensing2025,labarcaQuantumSensingDisplacements2026}.

The GKP states defined above are not square integrable and thus not valid wavefunctions; several regularized versions, called finite-energy GKP states, coexist in the litterature \cite{matsuura2020equivalence}. We follow \cite{gottesmanEncodingQubitOscillator2001} and introduce a regularizing operator $\oE_\epsilon := e^{-\frac\epsilon2(\oq^2+\op^2)}$ parametrized by $\epsilon>0$. The finite-energy GKP states are then related to their infinite-energy counterpart through
\begin{equation}
    |\psi_{k,\epsilon}\rangle := \oE_\epsilon |\psi_k\rangle,
\end{equation}
see e.g. \cite{sellemExponentialConvergenceDissipative2022a} for explicit expressions of the corresponding wavefunctions.
Note that the energy of these states is defined as the expectation value of the photon-number operator $\oN = \oa^\dagger \oa$ (with $\oa = (\oq - i\op)/\sqrt2$ the so-called annihilation operator). For $\epsilon \ll 1$, it grows as $\langle \oN \rangle \propto 1/\epsilon$ on GKP states.

\section{TWO-DISSIPATORS DYNAMICS}
\label{sec-main-eq} 
In \cite{sellemDissipativeProtectionGKP2025},
a Lindblad type dynamics was studied to stabilize a finite-energy square GKP code. It involves 
$4$ dissipators, inspired from the stabilizers of the logical code,
\begin{equation}
\begin{aligned}
    \oL_1 &= \sin(\etasq \oq) + i\epsilon\cos(\etasq \oq) \op,\\
    \oL_2 &= \sin(\etasq \op) - i\epsilon\cos(\etasq \op) \oq,\\
    \oL_3 &=   \cos(\etasq \oq) - i\epsilon\sin(\etasq \oq) \op  - e^{ \epsilon\etasq/2}\,\oid ,\\
    \oL_4 &= \cos(\etasq \op) + i\epsilon\sin(\etasq \op) \oq - e^{ \epsilon\etasq/2}\,\oid,
\end{aligned}
\label{eq-prx4}
\end{equation}
where $\etasq = \sqrt{2\pi d}$ is the lattice constant of a square GKP qudit for $d\in \mathbb N^*$: the choice $\etasq = 2\sqrt\pi$ ($d=2$) stabilizes a qubit, while $\etasq = \sqrt{2\pi}$ ($d=1$) stabilizes a single state.

Here, we study the properties of the following modification of this scheme: we consider only the first two dissipators, but replace $\etasq$ with $\eta = \etasq/2$, for the case of a qubit ($\eta = \sqrt\pi$) and a qunaught ($\eta  = \sqrt{\pi/2}$), leading to the Lindblad equation
\begin{equation}
    \frac{d}{dt}\orho =  \mathcal L(\orho) := D[\oM_1](\orho) +  D[\oM_2](\orho)
    \label{eq-main-lindblad}
\end{equation}
with $D[\oM](\orho) := \oM \orho \oM^\dagger - \frac12\oM^\dagger\oM\orho - \frac12\orho\oM^\dagger\oM$ and
\begin{align}
    \oM_1 &= \sin(\eta \oq) + i\epsilon\cos(\eta \oq) \op\\
    \oM_2 &= \sin(\eta \op) - i\epsilon\cos(\eta \op) \oq.
\end{align}

The motivation behing these modifications can be understood through the following naive intuition:
understanding the terms in $\epsilon$ as perturbations, considering Lindblad operators as constraints applied to a system, and ignoring non-commutativity of operators for the sake of intuition, the constraint $\sin(\theta) = 0$ is equivalent to $\sin(2\theta) = 0; \; \cos(2\theta)=1$.
As we'll briefly explain in Sec. \ref{sec-implem},
both the reduction of $\eta$ and the number of dissipators could significantly alleviate experimental implementation constraints.

\begin{figure}
    \includegraphics[width=\columnwidth]{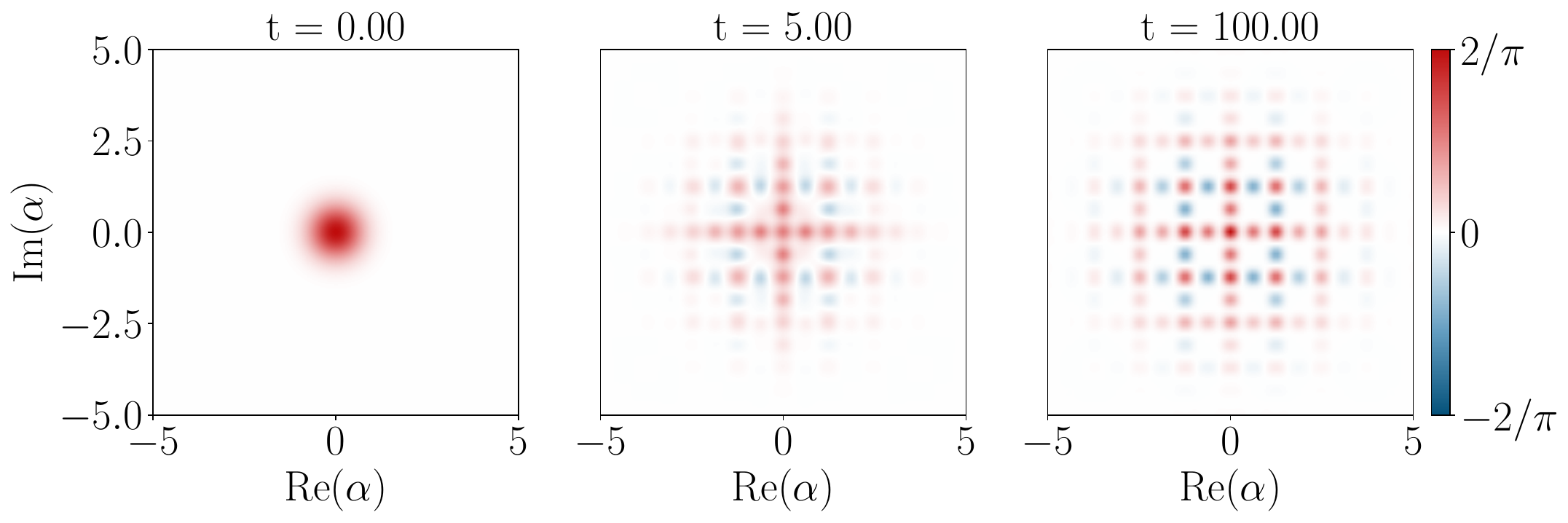}
    \\
    \includegraphics[width=\columnwidth]{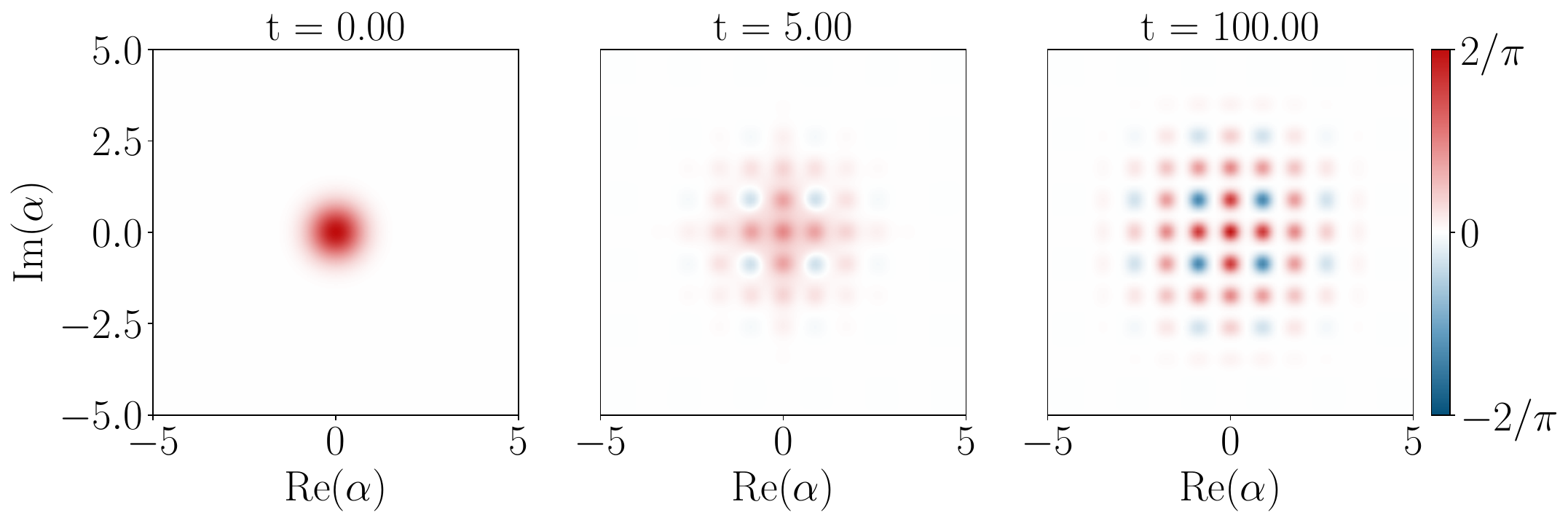}
    \caption{Long-time simulations of Eq. \eqref{eq-main-lindblad}; initialized in vacuum $\orho_0 = |0\rangle\langle 0|$, for $\epsilon=0.15$. {\bf Top.} With $\eta = \sqrt\pi$, we observe the state converge to a GKP qubit state (for this initial state, the steady state is close to the so-called magic state \cite{bravyiUniversalQuantumComputation2005b} $\cos(\pi/8)|+Z\rangle + \sin(\pi/8)|-Z\rangle$ inside the GKP code, with fidelity $>92\%$). {\bf Bottom.} With $\eta = \sqrt{\pi/2}$ there is a single steady state, also known as a GKP qunaught.}
    \label{eq-conv-ideal}
\end{figure}

Fig. \ref{eq-conv-ideal} shows long-time simulations of Eq. \eqref{eq-main-lindblad} starting from the vacuum state $\orho_0 = |0\rangle\langle 0|$, which is a natural initial state as preparing vacuum amounts to letting the system converge to its ground state (provided its environment is cold enough to approximate its temperature as being $0$). It illustrates how this simplified dynamics already allows for the stabilization of GKP qubit or qunaught states.
In the  rest of the paper, we now provide analytical result supporting this property and numerically study the effect of noise, in the form of photon loss.

\section{A priori estimates}
\label{sec-stab}
Direct calculations, adapted from the methodology in \cite{sellemStabilityDecoherence2023}, allow deriving explicit bounds on the energy, defined
as the expectation value $\langle \oN \rangle = \tr(\oN\orho)$ of the photon
number operator $\oN =\oa^\dagger \oa$, along trajectories of a density operator governed by Eq. \eqref{eq-main-lindblad}.
Here, we obtain a priori estimates by formal computations,
led as if the dimension of the underlying Hilbert space
were finite. We plan to exploit these estimates for a fully
rigorous mathematical analysis in future publications.
\begin{theorem}
    For $\eta>0$, there exists $\overline \epsilon \in (0,2/\eta)$ such that for all $\epsilon<\overline \epsilon$ and $\orho_0$ with $\tr(\oN \orho_0) < \infty$, for all $t\geq 0$:

\begin{equation}
\frac{d}{dt}\tr(\oN \orho_t) \leq -\lambda(\epsilon,\eta) \tr(\oN \orho_t) + \mu(\epsilon,\eta)
\label{eq-bound-N}
\end{equation}
with $\lambda>0, \mu>0$ some constants depending on $\epsilon, \eta$.
In particular, the energy along trajectories remains bounded by $\max \left(\tr(\oN \orho_0), \mu(\epsilon,\eta)/\lambda(\epsilon,\eta) \right)$ at all times. Moreover, $\lambda(\epsilon,\eta)$ can be chosen arbitrarily close below $2\epsilon\eta (1-\epsilon\eta/2)$.
\end{theorem}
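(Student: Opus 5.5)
The plan is to compute the Heisenberg-picture generator on $\oN$ and bound it from above by $-\lambda\oN+\mu\,\oid$ in the operator sense. Since $\frac{d}{dt}\tr(\oN\orho_t)=\tr(\mathcal L^*(\oN)\orho_t)$, the differential inequality \eqref{eq-bound-N} then follows, and the uniform bound on the energy follows from Grönwall's lemma. Here
\[
\mathcal L^*(\oX)=\sum_{k=1,2}\tfrac12\bigl(\oM_k^\dagger[\oX,\oM_k]+[\oM_k^\dagger,\oX]\oM_k\bigr).
\]
I write $\oN=\frac12(\oq^2+\op^2-\oid)$. The symplectic rotation $\oq\mapsto\op$, $\op\mapsto-\oq$ maps $\oM_1$ to $\oM_2$ up to a sign, and it preserves $\oN$. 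So it suffices to compute $D^*[\oM_1](\oq^2)$ and $D^*[\oM_1](\op^2)$, and then obtain the $\oM_2$ terms by this symmetry. Throughout I write $\oM_1=f(\oq)+i\epsilon g(\oq)\op$ with $f=\sin(\eta\,\cdot)$ and $g=\cos(\eta\,\cdot)$.

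First, the $\oq^2$ part. We have $[\oq^2,\oM_1]=-2\epsilon g(\oq)\oq$. Reorganising with $[\op,g^2(\oq)]=-i(g^2)'(\oq)$, a short computation gives
\[
D^*[\oM_1](\oq^2)=-\epsilon(1+\epsilon\eta)\,\oq\sin(2\eta\oq)+\epsilon^2\cos^2(\eta\oq).
\]
This is not sign-definite, but it is at most linear in $|\oq|$. Using $|\oq\sin(2\eta\oq)|\le\delta\oq^2+\frac1{4\delta}$, it is bounded above by $\delta'\oq^2+C_\delta$ with $\delta'$ arbitrarily small.

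Second, the $\op^2$ part. I will expand $[\op^2,\oM_1]$, which involves $f'=\eta g$ and $g'=-\eta f$. I then collect terms in the symmetric form $\op\,h(\oq)\,\op$ plus lower-order remainders. I expect these remainders to be bounded functions of $\oq$ and terms of the type $\oq\sin(2\eta\oq)$ or $\op\,\cdot\,$bounded. The target is a leading term of the form
\[
-2\epsilon\eta\,\op\bigl(\cos^2(\eta\oq)+\sin^2(\eta\oq)\bigr)\op+\epsilon^2\eta^2\,\op\bigl(\cdots\bigr)\op=-2\epsilon\eta\bigl(1-\tfrac{\epsilon\eta}{2}\bigr)\op^2+\text{(bounded/lower order)}.
\]
In this expression the $\cos^2$ contribution comes from the cross term between $i\epsilon g\op$ and $f'$, and the $\sin^2$ contribution from the commutator with $g$. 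The constraint $\epsilon<2/\eta$ is exactly what makes this coefficient negative. The remaining cross terms, such as $\epsilon\eta(\op f g+gf\op)$, are linear in $\op$. I will absorb them via $\pm(\op\oB+\oB^\dagger\op)\le\delta\op^2+\delta^{-1}\oB^\dagger\oB$ with $\oB$ bounded.

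Adding the two dissipators then gives
\[
\mathcal L^*(\oN)\le-\bigl(2\epsilon\eta(1-\tfrac{\epsilon\eta}2)-c\,\delta\bigr)\oN+C(\delta,\epsilon,\eta)\oid.
\]
Letting $\delta\to0$ shows that $\lambda$ can be taken arbitrarily close to $2\epsilon\eta(1-\epsilon\eta/2)$, at the cost of a larger $\mu$. The smallness threshold $\overline\epsilon$ is there to ensure that the $O(\epsilon^2)$ and $O(\epsilon^2\eta)$ remainders, for instance $-\epsilon^2\eta\,\oq\sin(2\eta\oq)$, do not spoil the sign. The main obstacle is the second step: correctly tracking the operator ordering in $D^*[\oM_1](\op^2)$, so that the quadratic-in-$\op$ terms combine into $\cos^2+\sin^2$ with the exact coefficient $-2\epsilon\eta+\epsilon^2\eta^2$. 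I would do this by symmetrising everything as $\op h\op$ and checking that the quartic-looking terms cancel.
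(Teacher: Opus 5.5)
Your plan is correct and follows essentially the paper's route. You compute $D^*[\oM_1](\oq^2)$ and $D^*[\oM_1](\op^2)$, transfer them to $\oM_2$ via the rotation $\oq\to\op$, $\op\to-\oq$, and, after adding the two dissipators, absorb the terms linear in $|\oq|$ (resp. $|\op|$) into the $-\oq^2$ (resp. $-\op^2$) supplied by the other dissipator using a Young-type bound, which is the paper's $r$-trick \eqref{eq-rtrick}. Your anticipated $\op^2$ step also checks out: symmetrizing gives exactly $D^*[\oM_1](\op^2)=-2\epsilon\eta\,\op^2+\epsilon^2\eta^2\,\op\sin^2(\eta\oq)\op+\eta^2\cos^2(\eta\oq)+\epsilon\eta^3\cos(2\eta\oq)$, so no linear-in-$\op$ cross terms survive, and the only role of $\overline\epsilon$ is to ensure $\epsilon\eta<2$.
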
\vspace{0.5em}
The proof of this theorem is mostly technical and deferred to the Appendix.
Note that this result, relating only to the stability of the dynamics, is expressed for arbitrary values $\eta>0$, although only the values $\eta\in\{\sqrt\pi, \sqrt{\pi/2}\}$ are considered in the rest of the paper.

\section{Convergence of logical observables}
\label{sec-log}

Let us consider the case of a qubit ($d=2$, corresponding to $\eta = \sqrt\pi$). Two notions of convergence are of interest. On the one hand, any state should converge to the codespace, that is the two-dimensional subspace spanned by GKP states. On the other hand, adding spurious noise process on top of the stabilizing dynamics can lead to an evolution inside the logical subspace:  in the infinite time limit, any state decoheres to a unique mixed state, and the convergence rate of this evolution inside the codespace quantifies the timescale on which logical information decoheres.

To quantify convergence to the codespace, let us recall that GKP states are, up to small errors due to normalization, close to $+1$ eigenstates of the GKP stabilizers. Instead of solving for the evolution of states through the Lindblad equation, we can solve for the evolution of periodic observables through its adjoint, \emph{i.e.} in the so-called Heisenberg picture.

\input{spectralgap.tex}

\section{Effect of photon loss}
\label{sec-noise}
To study the robustness to noise, we modify the Lindblad equation to add a term modeling photon loss at rate $\kappa>0$:
\begin{equation}
    \frac{d}{dt} \orho = D[\oM_1](\orho) + D[\oM_2](\orho) + \kappa D[\oa](\orho).
    \label{eq-stab-loss}
\end{equation}
Note that without the stabilization (dissipators in $\oM_1$ and $\oM_2$), any state converges exponentially to the vacuum state $|0\rangle$.
When the stabilization is on, the solution to Eq. \eqref{eq-stab-loss} stays close to the codespace at all times instead. However, as shown Fig. \ref{fig-conv-loss}, despite the confinement of solutions to the codespace, the logical coherences inside this space itself decay.
\begin{figure}
    \includegraphics[width=\columnwidth]{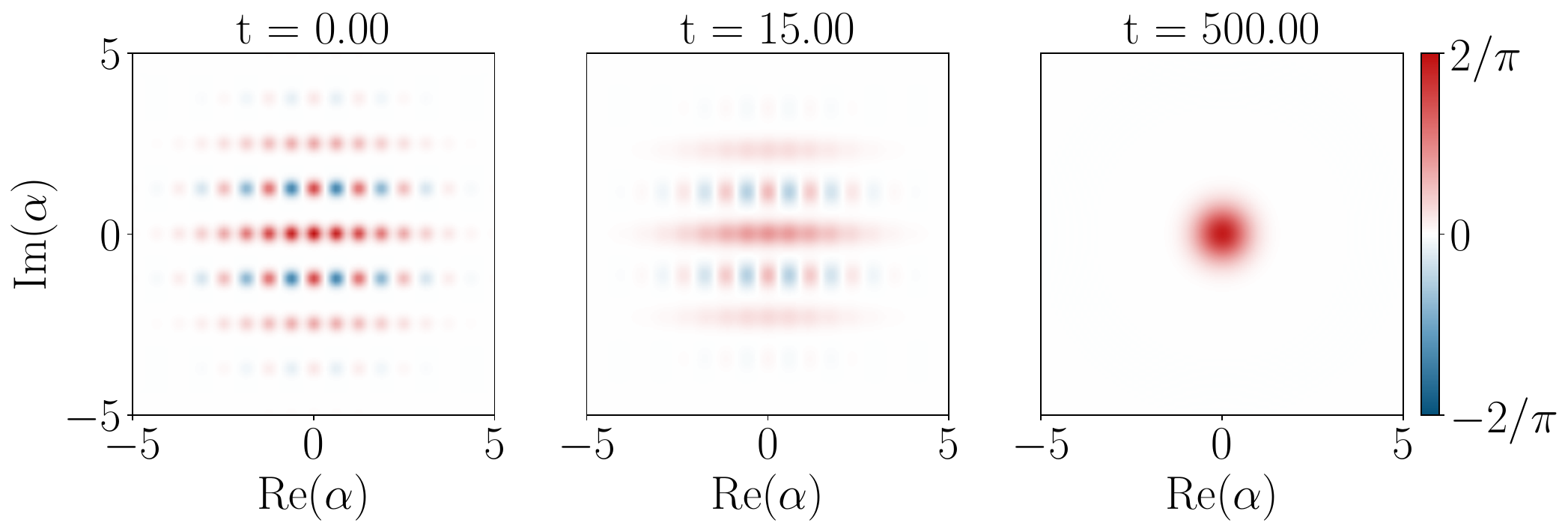}\\
    \includegraphics[width=\columnwidth]{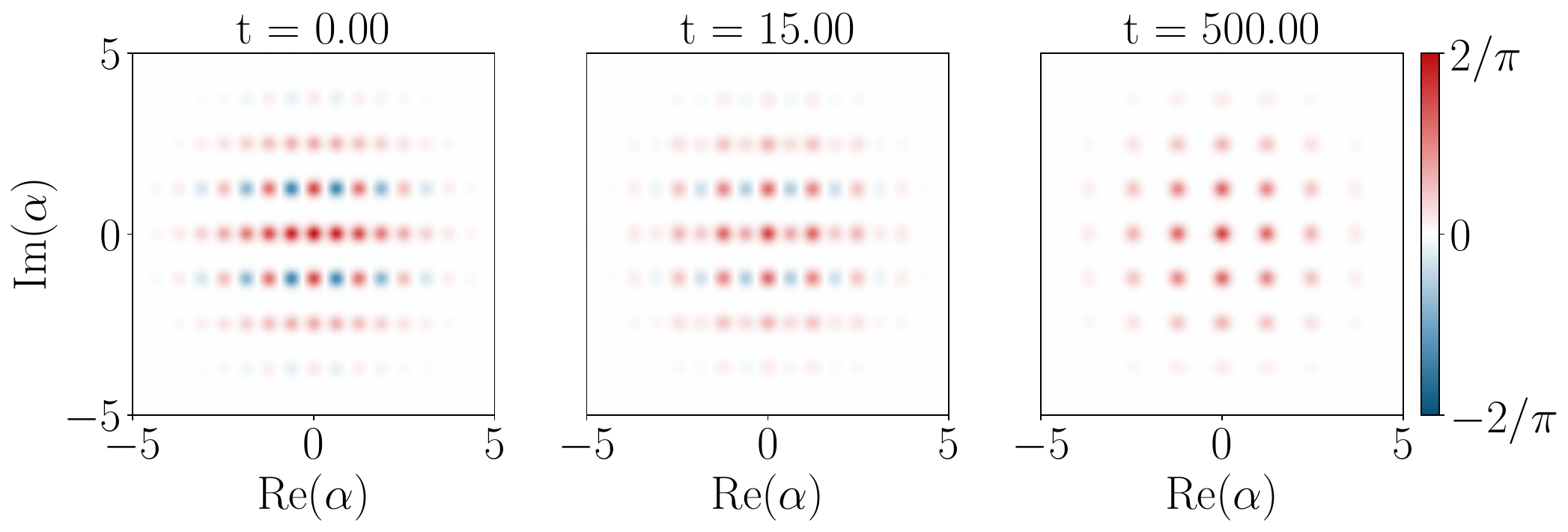}
    \caption{Evolution of an initial logical $|+X\rangle$ state in the presence of photon loss.
    {\bf Top.} Without stabilization, that is under $\frac{d}{dt}\orho = \kappa D[\oa](\orho)$, any state of a harmonic oscillator converges to vacuum (exponentially fast with a characteristic time $1/\kappa$).
    {\bf Bottom.} With the stabilization, that is under $\frac{d}{dt}\orho = D[\oM_1](\orho) + D[\oM_2](\orho) + \kappa D[\oa](\orho)$, the GKP space is protected but logical coherences inside that space are lost (decoherence to  the logical identity operator of the code, i.e.  the center of the logical Bloch sphere).
    Here, $\eta=\sqrt\pi$, $\epsilon=0.15$ and $\kappa=10^{-2}$.}
    \label{fig-conv-loss}
\end{figure}
\vspace{0.5em}

Following \cite{sellemStabilityDecoherence2023}, we define logical coordinate observables as
\begin{align}
    \oZ &= \sign(\cos(\etasq/2 \oq)) = \sign(\cos(\eta \oq)),\\
    \oX &= \sign(\cos(\etasq/2 \op)) = \sign(\cos(\eta \op)),\\
    \oY &= -i \oZ \oX.
\end{align}
\begin{figure}
    \centering
    \includegraphics[width=0.8\columnwidth]{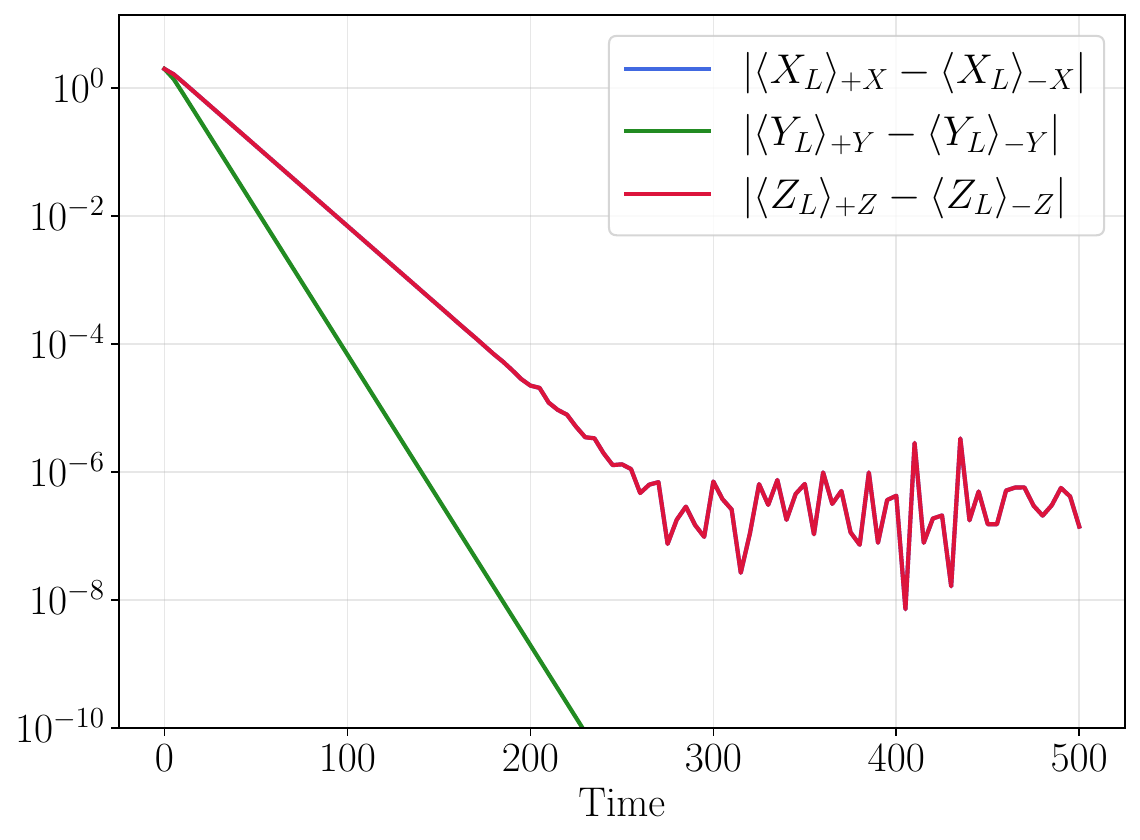}
    \caption{Decay of logical observables for $\epsilon=0.15$ and $\kappa=10^{-2}$. Following \cite{sellemDissipativeProtectionGKP2025}, we study the decay of the constrast between two opposite logical states along time (rather than values on a specific state) to easily get rid of final values on the steady-state of the dynamics. The different rate observed for $\oY$ compared to $\oX$ and $\oZ$ is a well-known feature of square GKP codes, due to the fact that $\oX$ and $\oZ$ approximate a displacement along orthogonal sides of a square of the grid while $\oY$  approximate a displacement along its diagonal hence longer.}
    \label{fig-decay-exp}
\end{figure}
\begin{figure}
        \centering
\includegraphics[width=\columnwidth]{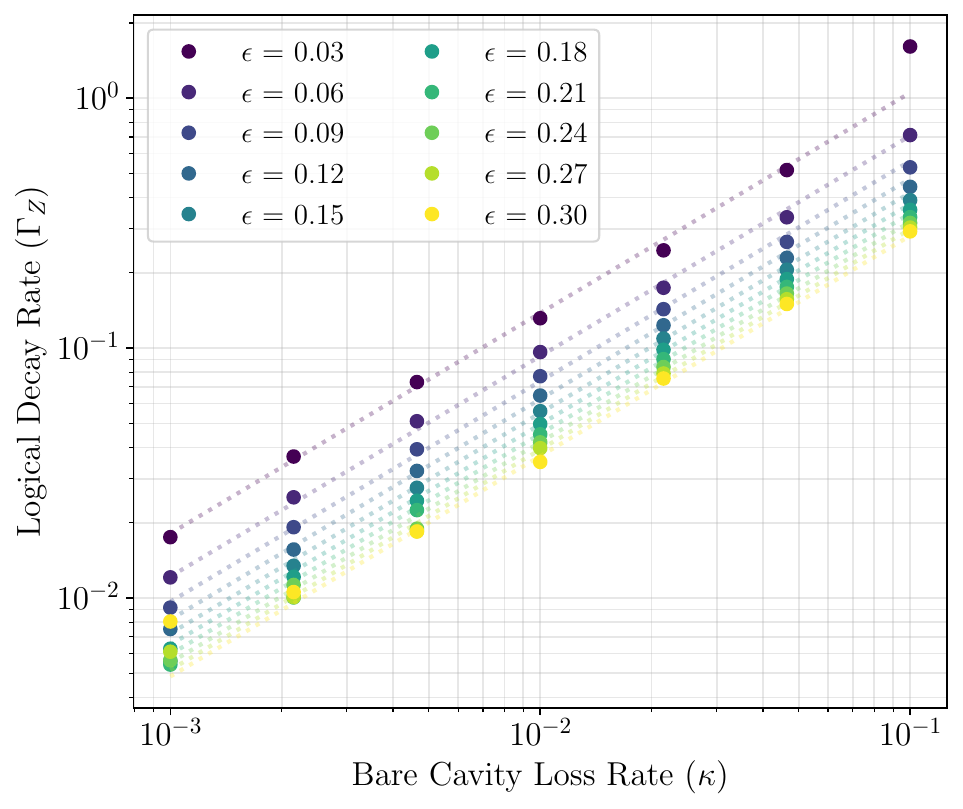}
\caption{Decay rate of the expectation value of $\oZ$ as a function of photon loss ($\kappa$) and regularizing parameter $\epsilon$. We attribute the dependence on $\epsilon$ to the fact that the mean photon number of a finite-energy GKP states grows with $1/\epsilon$ \cite{sellemExponentialConvergenceDissipative2022a}. The dashed lines correspond to a numerical fit to a power law $\Gamma_Z = A\kappa^n/\epsilon^r$; we find $A\simeq 1$, $n\simeq 0.88$ and $r\simeq 0.57$. }
\label{fig-scaling}
\end{figure}
The decay of the corresponding expectation values is well-captured by an exponential model, after a fast initial transient (see Fig. \ref{fig-decay-exp}). We can thus define logical decoherence rates $\Gamma_{X,Y,Z}$ associated to logical observables as the rate of this exponential decay. We plot in Fig. \ref{fig-scaling}
this decay rate as a function of both $\epsilon$ and $\kappa$.
It appears to be well captured by a power law of the form $\Gamma(\kappa,\epsilon) = A \kappa^n/ \epsilon^{r}$ with $A,n,r$ fitting parameters.
Crucially, this scaling is  qualitatively worse than that of the full four dissipator dynamics in \cite{sellemDissipativeProtectionGKP2025}, where the logical error rate rather appeared well-capture by an exponential dependence of the form $\Gamma(\kappa,\epsilon) \propto \epsilon\,  e^{-1/\sigma(\kappa,\epsilon)}$ with $\sigma$ a linear expression of $\epsilon$ and $\kappa/\epsilon$.

\section{Stabilization of a GKP qunaught}
\label{sec-qunaught}
Apart from their application for quantum error correction, GKP grid states have also been considered as valuable resources in metrology.
They can allow to measure simultaneously the commuting modular observables derived from
$e^{2i\sqrt\pi \oq}$ and $e^{2i\sqrt\pi \op}$, or equivalently $\oq \textrm{ mod } \sqrt\pi$ and $\op \textrm{ mod } \sqrt \pi$,
circumventing the Heisenberg uncertainty principle attached to measurements of $\oq$ and $\op$.
These states can be generated through well-known universal control methods for harmonic oscillators, such as the ECD \cite{eickbuschFastUniversalControl2022} or SNAP methods \cite{heeresCavityStateManipulation2015}

In contrast, the dynamics proposed in Eq. \eqref{eq-main-lindblad} (with the choice $\eta = \sqrt{\pi/2}$) allows for the stabilisation
of such a grid-state in steady-state. As shown in Fig. \ref{fig-qunaught}, even in the presence of photon loss (additional dissipator $D[\oa]$ in the Lindblad equation), the fine periodic structure of the steady state that makes it relevant for metrology is preserved, while its contrast dies out.

\begin{figure}[htbp]
    \centering
    \includegraphics[width=0.5\textwidth]{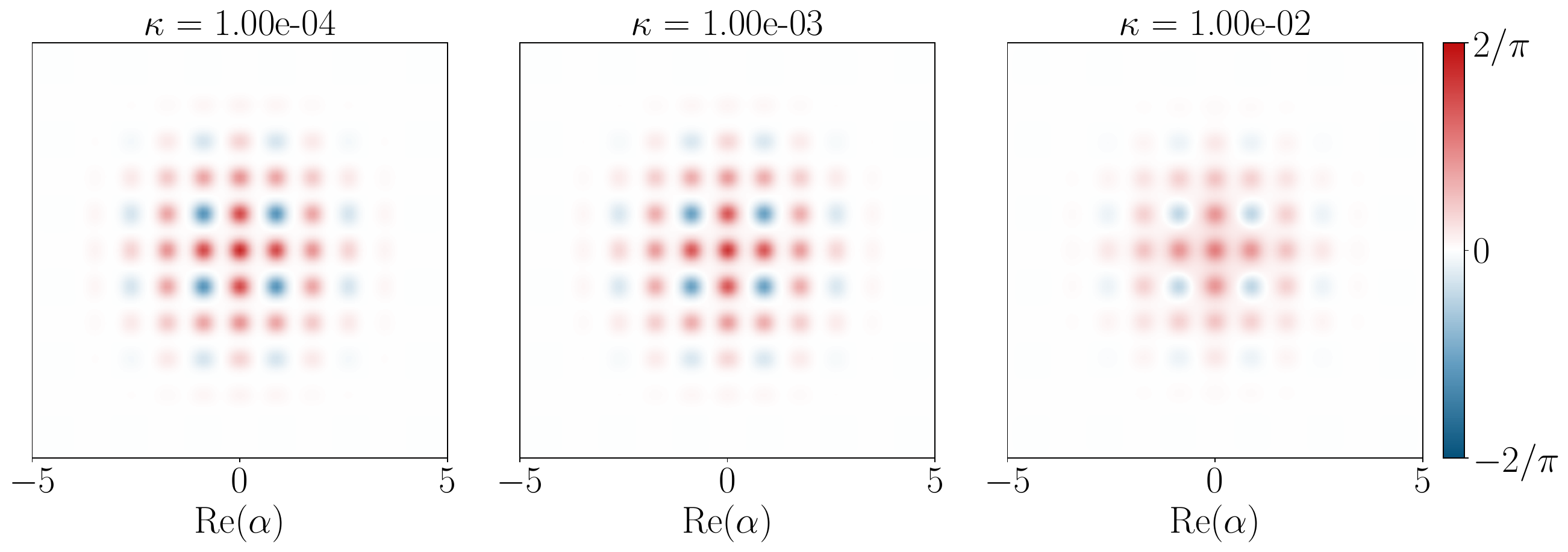}
    \caption{Steady-state of $\frac{d}{dt}\orho = D[\oM_1](\orho) + D[\oM_2](\orho) + \kappa D[\oa](\orho)$ as a function of the relative single-photon loss rate $\kappa$. Here $\eta = \sqrt{\pi/2}$ and $\epsilon=0.15$.}
    \label{fig-qunaught}
\end{figure}

\section{POSSIBLE IMPLEMENTATIONS}
\label{sec-implem}
In practice, the Lindblad equation \eqref{eq-main-lindblad}
models a quantum harmonic oscillator coupled to an unrealistic exotic bath. It can, however, be approximated through reservoir engineering methods as used in previous GKP proposals \cite{sellemDissipativeProtectionGKP2025} or for other bosonic codes \cite{lescanneExponentialSuppressionBitflips2020}.
In a nutshell, reservoir engineering methods rely on coupling a system of interest to a strongly damped auxiliary system; the complexity of bath engineering then transforms into the complexity of the coupling to engineer.

The complexity of these methods typically scales with the number of dissipation channels to engineer. For instance, each dissipation channel can be approximated by a dedicated ancilla system -- translating the number of dissipators to hardware complexity. Alternately, each necessary dissipators can be engineered stroboscopically, which corresponds to a Trotter decomposition of the evolution -- the achievable dissipation rate is then inversely proportional to the number of dissipators in this decomposition.
In this regard, stabilizing GKP states with half the numbers of dissipation channels can significantly alleviate implementation constraints.
\paragraph{circuitQED}
In the case of circuitQED, both the system of interest and the ancilla are harmonic oscillators, corresponding to electromagnetic modes typically at microwave frequencies. The implementation method proposed in \cite{sellemDissipativeProtectionGKP2025} for the stabilization of a GKP qubit with four dissipators (Eqs. \ref{eq-prx4}) could straightforwardly be adapted to our new dynamics with only two, since they directly correspond to their first two dissipators but with a different value of the parameter $\eta$. 
In addition, in their case $\eta$ is a function of the impedance of the mode of interest, which needed to be at twice the resistance quantum, a high value possibly  difficult to achieve in practice. Reducing $\eta$ alors alleviate this requirement.

\paragraph{Trapped ions}
Trapped ions also appear as a particularly appealing platform for GKP experiment, with several demonstration of GKP stabilization reported in the litterature. In this case, the harmonic oscillator is a motional degree of freedom of an ion (or possibly a collective mode of an ensemble of ions), and the auxiliary system is the spin degree of freedom of the same ion. 
The $\eta$ parameter in this paper depends on a quantity called the Lamb-Dicke parameter in that field \cite{winelandExperimentalIssuesCoherent1998}, which can depend both on the ion species and specificities of the trap.
A possibly interesting direction to avoid fine-tuning $\eta$ could be to resort to so-called Quantum Signal Processing (QSP) methods. Notably, in the context of trapped ions, the recent preprint \cite{mcgarryProgrammableQuantumSimulation2026} introduced a method to approximate nonlinear Hamiltonian based on their Fourier decomposition, a feature particularly appealing in our context where all relevant operators are perturbations of periodic operators.

\section{CONCLUDING REMARKS}
\label{sec-concl}
In conclusion, we have presented a simplified, two-dissipator Lindblad dynamics that approximately stabilizes finite-energy Gottesman-Kitaev-Preskill (GKP) states, and established its convergence properties. While numerical simulations show that the logical decoherence rates scales qualitatively worse than a previously proposed four-dissipator dynamics in presence of photon loss, it trades robustness for implementation simplicity. Both methods are also based on very similar approaches, so that one can expect technological developments used to realize the stabilization proposed here to also be directly useful for the more robust but complex four-dissipator approach; it can thus appear as a tempting first experimental objective to validate experimental developments.
We also showed that with a slight modification of parameters the same approach allows for the stabilization of GKP qunaught states for quantum metrology.

Note that we focused here on so-called square GKP states. We refer the reader to e.g. \cite{conrad2022gottesman,royerEncodingQubitsMultimode2022a} for a general theory of GKP qudits of arbitrary logical dimension on more general lattices. Similar stabilizing dynamics could be derived for these generalizations, in a similar fashion to this article where Lindblad operators are associated to logical stabilizers of the GKP code.

Finally, we hope to leverage the formal a priori energy estimates
obtained to develop a fully rigorous analysis of the system in
forthcoming publications.\\

Numerical simulations were run on a laptop GPU with double precision arithmetic (Nvidia Quadro RTX 3000 with 6Go of RAM), using the libraries jaxquantum \cite{jha2025jaxquantum} for the manipulation of GKP states and dynamiqs \cite{guilmin2025dynamiqs} for the numerical integration of Lindblad equations. The related source codes are available from  the corresponding author upon reasonable request.




\section*{APPENDIX}
\label{apdx}
\subsection{Derivation of Eq. \eqref{eq-bound-N}}
Here, we obtain a priori estimates by formal computations,
led as if the dimension of the underlying Hilbert space
were finite. We plan to exploit these estimates for a fully
rigorous mathematical analysis in future publications.

Formally, the evolution of the expectation value of an observable (that is a hermitian operator) $\oO$ on the solution $\orho_t$ to Eq. \eqref{eq-main-lindblad} can be obtained by duality through
\begin{align}
    \frac{d}{dt}\tr(\oO\orho_t) &= \tr(\mathcal L^*(\oO)\orho_t)\\
    &=\tr\left( D^*[\oM_1](\oO) \, \orho_t + D^*[\oM_2] (\oO)\,  \orho_t \right)
\end{align}
with $D^*[\oM](\oO) := \frac12\oM^\dag [\oO,\oM] + h.c.$
Using $\oN = \frac{\oq^2+\op^2-1}{2}$ allows deriving estimates on $\mathcal L^*(\oN)$. Below we'll repeatedly use the relation $[f(\oA),\oB] = f'(\oA) [\oA,\oB]$ valid for $f$ analytical and $\oA,\oB$ such that $[\oA, [\oA,\oB]] = 0$.  We'll also make use of the following operator inequalities, valid for any $\alpha,\beta>0$ and $r\in (0,1)$,
as well as their equivalent for $\op$:
\begin{align}
    -\oq \sin(2\eta \oq) &\leq |\oq|\\
    \cos(2\eta\oq) &\leq \oid\\
    -\alpha \oq^2 + \beta |\oq| &\leq -r\alpha\oq^2 + \frac{\beta^2}{4\alpha(1-r)}\oid. \label{eq-rtrick}
\end{align}
We find
\begin{align}
    [\oq^2, \oM_1] &= [\oq^2, \sin(\eta\oq) + i\epsilon\cos(\eta \oq) \op]\\
    &= i\epsilon\cos(\eta \oq) [\oq^2, \op] = -2\epsilon \oq \cos(\eta \oq)\\
    \oM_1^\dagger [\oq^2,\oM_1] &=
        -2\epsilon \oq\cos(\eta\oq)\sin(\eta\oq) +2 i\epsilon^2\op\oq\cos^2(\eta \oq)\\
        &= -\epsilon \oq\sin(2\eta\oq) + i\epsilon^2\op\oq(\cos(2\eta\oq) +\oid) 
\end{align}
hence
\begin{align}
    &\frac12 \oM_1^\dagger [\oq^2,\oM_1] + h.c.
     \notag\\
    &\quad = -\epsilon \oq\sin(2\eta\oq) + \frac {i\epsilon^2}2[\op,\oq(\cos(2\eta\oq)+\oid)]\\
    &\quad = -\epsilon \oq\sin(2\eta\oq) + \frac{\epsilon^2}2 (\cos(2\eta\oq)+\oid) - \eta \epsilon^2 \oq \sin(2\eta\oq)\\
    &\quad \leq \epsilon(1+\epsilon\eta) |\oq| + \epsilon^2\oid.
\end{align}
Similar but slightly tedious computations lead to
\begin{align}
    \frac12 \oM_1^\dag [\op^2, \oM_1] + h.c.
    \leq -2\epsilon\eta (1-\frac{\epsilon\eta}{2})\op^2 +3\epsilon\eta^3 \oid
\end{align}
so that all in all, for any $r\in (0,1)$:
\begin{align}
    D^*[\oM_1](\oN) &\leq
    -\epsilon\eta (1-\frac{\epsilon\eta}{2})\op^2 
    + \frac12\epsilon(1+\epsilon\eta)|\oq| \\
    &\qquad \qquad \qquad \qquad +\frac12(\epsilon^2 
    + 3\epsilon\eta^3)\oid \label{eq-step}\\
    &\leq -r \epsilon\eta (1-\frac{\epsilon\eta}{2})\op^2 + C(r,\epsilon,\eta)\oid
\end{align}
with $C$ depending on $r,\epsilon,\eta$ from combining the constant terms coming from using Eqs. \eqref{eq-rtrick} and \eqref{eq-step}.
Combining with the corresponding calculation for $\oM_2$, we finally obtain:
\begin{equation}
    \mathcal L^*(\oN) \leq -2 r \epsilon\eta (1-\frac{\epsilon\eta}{2})\oN + C'(r,\epsilon,\eta) \oid.
\end{equation}
Note that here $r$ is a free parameter in $(0,1)$ required in proof steps, but can be chosen arbitrarily close to $1$.

\subsection{Derivation of Eq. \eqref{eq:periodic_obs}}
We recall that $\oO_f = f(2\eta \oq)$ with $f$ a smooth $2\pi$-periodic function and $\eta=\sqrt{\pi}$. As $[\oM_2, \oO_f]=0$, 
$$\mathcal{L}^*(\oO_f)=\frac{1}{2} \oM_1^\dagger [\oO_f,\oM_1] + h.c.$$
The commutator $[\oO_f,\oM_1]$ reads
\begin{align}
[\oO_f,\oM_1] &= [f(2 \eta \oq),\sin(\eta \oq)] + i\epsilon [f(2\eta \oq),\cos(\eta \oq) \op] \nonumber\\
&=i\epsilon \cos(\eta \oq)  [f(2\eta \oq),\op] \nonumber\\
&=-\epsilon\cos(\eta \oq)\,2\eta f'(2\eta \oq).
\end{align}
where we used $\op=-i\partial_q$. Hence, it remains to compute
\begin{align*}
&\oM_1^\dagger[\oO_f,\oM_1] \\
&= (\sin(\eta q)-i\epsilon \op\cos(\eta \oq))(-\epsilon\cos(\eta \oq)\,\eta' f'(2\eta \oq)).
\end{align*}
Applying the product rule to the action of $\op$ on $\cos(\eta \oq)\,2\eta f'(2\eta q)$ produces two contributions: one from the derivative of the trigonometric factor and one from the derivative of $f'$, yielding a second-derivative term. Collecting terms gives
\begin{align*}
&\oM_1^\dagger[\oO_f,\oM_1]=
-\epsilon\eta\sin(2\eta  \oq) f'(2\eta \oq) \\
&+ 2\epsilon^2\eta\big(\cos^2(\eta \oq)\,2\eta f''(2\eta \oq)-2\eta\sin(\eta \oq)\cos(\eta \oq) f'(2\eta \oq)\big).\\
&= -(\epsilon \eta +2\epsilon^2 \eta^2) \sin(2 \eta \oq) f'(2\eta \oq) + 4 \epsilon^2 \eta^2 \cos( \eta \oq)^2 f''(2\eta \oq)
\end{align*}
This operator being already hermitian, we have shown Eq. \eqref{eq:periodic_obs}.

\section*{ACKNOWLEDGMENTS}
 We thank Philippe Campagne-Ibarcq and Baptiste Royer
 for useful discussions and comments.


\bibliographystyle{unsrt}
\bibliography{biblio}




\end{document}

%% file: symbolsmacros.tex
\newcommand{\orho}{{\boldsymbol \rho}}
\newcommand{\oA}{{\bf A}}
\newcommand{\oB}{{\bf B}}
\newcommand{\oE}{{\bf E}}

\newcommand{\oL}{{\bf L}}
\newcommand{\oN}{{\bf N}}
\newcommand{\oM}{{\bf M}}
\newcommand{\oO}{{\bf O}}
\newcommand{\oX}{{\bf X}}
\newcommand{\oY}{{\bf Y}}
\newcommand{\oZ}{{\bf Z}}

\newcommand{\oq}{{\bf q}}
\newcommand{\op}{{\bf p}}
\newcommand{\oa}{{\bf a}}
\newcommand{\oid}{{\bf Id}}
\newcommand{\etasq}{\eta_{\scriptscriptstyle \square}}
\DeclareMathOperator{\tr}{Tr}
\DeclareMathOperator{\sign}{Sign}

\newtheorem{lemma}{Lemma}
\newtheorem{theorem}{Theorem}
\newtheorem{corollary}{Corollary}

%% file: spectralgap.tex

For observables of the form $\oO_f = f(2\eta \oq)$ with $f$ a smooth $2\pi$-periodic function, one checks $[\oM_2, \oO_f]=0$ (using $2\eta^2=2\pi$), so that only $\oM_1$ contributes to the Heisenberg evolution. A direct computation, presented in the Appendix, yields
\begin{align}
\label{eq:periodic_obs}
    \mathcal{L}^*(\oO_f) &= -\big(\epsilon + 2\epsilon^2\eta\big) \eta\sin(2\eta \oq)\, f'(2\eta \oq) \nonumber\\
    &\quad + 4\epsilon^2 \eta^2 \cos^2(\eta \oq)\, f''(2\eta \oq),
\end{align}

where primes denote derivatives of $f$ with respect to its argument $\theta = 2\eta \oq$. This reduces the study to the one-dimensional operator
\begin{align}\label{eq:reduced-op}
    \mathcal{A} f(\theta) = \sin(\theta)\, f'(\theta) - \sigma\,(1+\cos\theta)\, f''(\theta),
\end{align}
up to a positive prefactor depending on $\epsilon$ and $\eta$.

\begin{lemma}[Symmetry]\label{lem:symmetry}
For $0 < \sigma < 2$, the operator $\mathcal{A}$ is symmetric and non-negative on $L^2(w)$ with weight $w(\theta) = (1+\cos\theta)^{1/\sigma - 1}$. More precisely, integration by parts gives
\begin{align}
    \langle f, \mathcal{A} g \rangle_w = \int_0^{2\pi} \sigma\,(1+\cos\theta)^{1/\sigma}\, f'(\theta)\, g'(\theta)\, \mathrm{d}\theta.
\end{align}
\end{lemma}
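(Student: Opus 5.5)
The plan is to write $\mathcal{A}$ in Sturm--Liouville (divergence) form with respect to the weight $w$, and then integrate by parts over one period.

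\textbf{Divergence form.} Set $a(\theta) = 1+\cos\theta$, so that $a' = -\sin\theta$. We look for $w$ with
\begin{align*}
    w\,\mathcal{A} g = -\big(P\, g'\big)'
\end{align*}
for some function $P$. Expanding the right-hand side gives $-P g'' - P' g'$. Matching the coefficient of $g''$ forces $P = \sigma a w$. Matching the coefficient of $g'$ requires $P' = -w\sin\theta = w\,a'$. Substituting $P = \sigma a w$ yields $\sigma(a'w + a w') = a' w$, that is
\begin{align*}
    \frac{w'}{w} = \frac{1-\sigma}{\sigma}\,\frac{a'}{a},
\end{align*}
so that $w = a^{1/\sigma - 1}$ and $P = \sigma a^{1/\sigma}$, consistent with the statement. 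This part is a routine check.

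\textbf{Integration by parts.} For smooth $2\pi$-periodic $f, g$ (real-valued, or with complex conjugation on $f$), we compute
\begin{align*}
    \langle f, \mathcal{A} g\rangle_w = -\int_0^{2\pi} f\,(P g')'\,\mathrm{d}\theta = \big[-f P g'\big]_0^{2\pi} + \int_0^{2\pi} P\, f' g'\,\mathrm{d}\theta .
\end{align*}
The boundary term vanishes because $f$, $g'$ and $P$ are $2\pi$-periodic. The resulting expression is symmetric in $f$ and $g$, so $\mathcal{A}$ is symmetric. Setting $f = g$ gives $\int P\, |f'|^2 \geq 0$ because $P \geq 0$, so $\mathcal{A}$ is non-negative.

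\textbf{Main obstacle: the degeneracy at $\theta = \pi$.} Both $a$ and $P$ vanish at $\theta=\pi$, and for $\sigma > 1$ the weight $w$ blows up there. The boundary term causes no difficulty: $P = \sigma a^{1/\sigma}$ is continuous, with $a^{1/\sigma}$ behaving like $|\theta-\pi|^{2/\sigma}$, and it is periodic. The real constraint is that $L^2(w)$ must contain the smooth periodic test functions, which requires $w$ to be integrable near $\pi$. Near $\pi$ we have $w \sim |\theta-\pi|^{2/\sigma - 2}$, which is integrable exactly when $2/\sigma - 2 > -1$, i.e. $\sigma < 2$. This is where the hypothesis $0<\sigma<2$ enters.

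To make the argument clean, I would cut out a small interval $(\pi-\delta, \pi+\delta)$ and integrate by parts on the complement. The extra boundary terms are of the form $f P g'$ evaluated at $\pi \pm \delta$, and they are $O(\delta^{2/\sigma}) \to 0$ as $\delta \to 0$. Dominated convergence then justifies the interior integrals, since $P f' g'$ is bounded and $w\, f\,\mathcal{A} g$ is controlled by $w$ times a bounded function, which is integrable for $\sigma<2$. This yields symmetry on the dense domain of smooth periodic functions, which is the claim of the lemma.
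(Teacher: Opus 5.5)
Your proposal is correct and is essentially the paper's argument: the divergence form $w\,\mathcal{A}g = -\big(\sigma(1+\cos\theta)^{1/\sigma} g'\big)'$ encodes exactly the cancellation $\sigma w_2' + \sin\theta\, w = 0$ that the paper uses after integrating the $g''$ term by parts. Your treatment of the degeneracy at $\theta=\pi$ adds a justification the paper leaves implicit, namely that $w$ is integrable exactly when $\sigma<2$ and that the excised boundary terms vanish.
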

\begin{proof}
Write $w_2(\theta) = (1+\cos\theta)^{1/\sigma}$ and note that
\begin{align}
    w_2' = -\frac{\sin\theta}{\sigma}\,(1+\cos\theta)^{1/\sigma - 1} = -\frac{\sin\theta}{\sigma}\, w.
\end{align}
Integrating by parts,
\begin{align}
    \langle f, \mathcal{A} g \rangle_w &= \int_0^{2\pi} f\big[\sin\theta\, g' - \sigma\,(1+\cos\theta)\, g''\big] w\, \mathrm{d}\theta \nonumber\\
    &= -\sigma \int_0^{2\pi} f\, g''\, w_2\, \mathrm{d}\theta + \int_0^{2\pi} f\, g'\, \sin\theta\, w\, \mathrm{d}\theta \nonumber\\
    &= \sigma \int_0^{2\pi} f'\, g'\, w_2\, \mathrm{d}\theta + \sigma\int_0^{2\pi} f\, g'\, w_2'\, \mathrm{d}\theta \nonumber\\
    &\quad + \int_0^{2\pi} f\, g'\, \sin\theta\, w\, \mathrm{d}\theta.
\end{align}
The last two terms cancel since $\sigma\, w_2' + \sin\theta\, w = 0$.
\end{proof}

We now establish a weighted Poincar\'{e} inequality for $\mathcal{A}$. The difficulty lies in the degeneracy of $w_2$ at $\theta = \pi$, which we handle via localized Hardy-type estimates. We decompose the torus into $I_1^+ = [\pi, 3\pi/2]$, $I_1^- = [\pi/2, \pi]$, $I_1 = I_1^+ \cup I_1^-$ and $I_2 = [0,\pi/2] \cup [3\pi/2, 2\pi]$.

\begin{lemma}[Localized Hardy inequality]\label{lem:hardy}
For $0 < \sigma < 2$ and $g \in C^\infty(\pi, 3\pi/2)$ with $g(3\pi/2) = 0$,
\begin{align}
    \|g\|_{L^2(w,\, I_1^+)}^2 \leq \frac{16\sigma^2}{(2-\sigma)^2}\, \|g'\|_{L^2(w_2,\, I_1^+)}^2.
\end{align}
The same holds on $I_1^-$ for $g$ with $g(\pi/2) = 0$, by symmetry.
\end{lemma}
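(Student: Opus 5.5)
The plan is a one-dimensional Hardy argument of ``vector field'' type. I will look for a weight $\psi\ge 0$ on $I_1^+$ that vanishes at the degenerate point $\theta=\pi$ and whose derivative dominates $w$. Then I integrate $g^2\psi'$ by parts and close with Cauchy--Schwarz. Throughout, write $u=1+\cos\theta$, so that $w=u^{1/\sigma-1}$ and $w_2=u^{1/\sigma}$. On $(\pi,3\pi/2)$ we have $-\sin\theta>0$ and $\sin^2\theta=u(1-\cos\theta)$.

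\emph{Choice of $\psi$.} I take $\psi(\theta)=u^{1/\sigma}/(-\sin\theta)=w_2/(-\sin\theta)$. Set $x=\theta-\pi$; near $x=0$ one has $u\sim x^2/2$ and $-\sin\theta\sim x$, hence $\psi\sim c\,x^{2/\sigma-1}\to 0$ because $\sigma<2$. The identity $w_2'=-(\sin\theta/\sigma)\,w$ from the proof of Lemma~\ref{lem:symmetry}, together with $u/\sin^2\theta=1/(1-\cos\theta)$, gives
\begin{align*}
\psi'=\frac{w_2'}{-\sin\theta}+\frac{w_2\cos\theta}{\sin^2\theta}
= w\Big(\frac1\sigma+\frac{\cos\theta}{1-\cos\theta}\Big)\ \ge\ \frac{2-\sigma}{2\sigma}\,w .
\end{align*}
The inequality holds because $\cos\theta\in[-1,0]$ on $I_1^+$, which forces $\cos\theta/(1-\cos\theta)\ge -1/2$. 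I also need a bound on $\psi^2/w_2$:
\begin{align*}
\frac{\psi^2}{w_2}=\frac{u^{1/\sigma}}{\sin^2\theta}=\frac{w}{1-\cos\theta}\le w .
\end{align*}

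\emph{Integration by parts.} Let $A=\|g\|^2_{L^2(w,I_1^+)}$ and $B=\|g'\|^2_{L^2(w_2,I_1^+)}$. First,
\begin{align*}
\frac{2-\sigma}{2\sigma}A\le\int_{I_1^+} g^2\psi'\,\mathrm d\theta=\big[g^2\psi\big]_{\pi}^{3\pi/2}-2\int_{I_1^+} g g'\psi\,\mathrm d\theta .
\end{align*}
The boundary term vanishes: at $3\pi/2$ because $g(3\pi/2)=0$, and at $\pi$ because $\psi(\pi^+)=0$ (assuming $g$ stays bounded there, which is the case for the smooth test functions considered). Cauchy--Schwarz with the splitting $g\,(\psi/\sqrt{w_2})\cdot g'\sqrt{w_2}$ and the bound $\psi^2/w_2\le w$ then yields $\frac{2-\sigma}{2\sigma}A\le 2\sqrt{A}\sqrt{B}$. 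Squaring gives $A\le \frac{16\sigma^2}{(2-\sigma)^2}B$, which is exactly the constant in the statement. The case of $I_1^-$ follows from the reflection $\theta\mapsto 2\pi-\theta$, which preserves $w$ and $w_2$ and exchanges the two endpoint conditions.

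\emph{Main obstacle.} The main difficulty is finding a $\psi$ that simultaneously vanishes at $\pi$, satisfies $\psi'\gtrsim w$ uniformly up to the degeneracy, and satisfies $\psi^2/w_2\lesssim w$. The natural first guess $\psi=w_2\,|\cot(\theta/2)|$ fails: its derivative only controls $u\,w$, which degenerates at $\pi$. Dividing by $-\sin\theta$ instead of multiplying by $u$ fixes this. One technical point remains: if $g$ is merely smooth on the open interval, a regularity or density argument is needed to justify the vanishing of $g^2\psi$ at $\pi$. I would handle it by truncating at $\pi+\delta$ and letting $\delta\to 0$, using $\psi\to 0$.
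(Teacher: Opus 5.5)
Your argument is correct and is the same multiplier-type Hardy argument as the paper's. The paper uses $\phi=(1+\cos\theta)^{1/\sigma-1/2}=\sqrt{1-\cos\theta}\,\psi$, for which Cauchy--Schwarz is exact ($\phi^2/w_2=w$) and the factor $\sqrt{1-\cos\theta}\ge 1$ is lost in $\phi'$; you instead spread that loss over $\psi'$ and $\psi^2/w_2\le w$, and reach the same constant. On your technical worry: when you truncate at $\pi+\delta$, the lower boundary term is $-g(\pi+\delta)^2\psi(\pi+\delta)\le 0$, so you can simply drop it and no boundedness of $g$ near $\pi$ is needed.
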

\begin{proof}
Set $\phi(\theta) = (1+\cos\theta)^{1/\sigma - 1/2}$ and differentiate:
\begin{align}
    \frac{d}{d\theta}\big(g^2 \phi\big) &= 2g\, g'\, \phi \nonumber\\
    &\quad - \Big(\frac{1}{\sigma} - \frac{1}{2}\Big)\sin\theta\, g^2\, (1+\cos\theta)^{1/\sigma - 3/2}.
\end{align}
Integrating over $[\pi, 3\pi/2]$, the boundary terms vanish ($\phi(\pi) = 0$ and $g(3\pi/2) = 0$), giving
\begin{align}
    &\Big(\frac{1}{\sigma} - \frac{1}{2}\Big) \int_\pi^{3\pi/2} g^2\, w\, \frac{(-\sin\theta)}{\sqrt{1+\cos\theta}}\, \mathrm{d}\theta \nonumber\\
    &\quad = -2\int_\pi^{3\pi/2} g\, g'\, (1+\cos\theta)^{1/\sigma - 1/2}\, \mathrm{d}\theta.
\end{align}
On $(\pi, 3\pi/2)$, $-\sin\theta / \sqrt{1+\cos\theta} \geq 1$, so the left-hand side is at least $(1/\sigma - 1/2)\|g\|_{L^2(w,I_1^+)}^2$. Applying Cauchy--Schwarz to the right-hand side,
\begin{align}
    \Big(\frac{1}{\sigma} - \frac{1}{2}\Big)\|g\|_{L^2(w,I_1^+)}^2 &\leq 2\,\|g\|_{L^2(w,I_1^+)}\,\|g'\|_{L^2(w_2,I_1^+)},
\end{align}
whence
\begin{align}
    \|g\|_{L^2(w,I_1^+)} \leq \frac{4\sigma}{2-\sigma}\,\|g'\|_{L^2(w_2,I_1^+)}.
\end{align}
\end{proof}

\begin{lemma}[Poincar\'{e} estimate on $I_2$]\label{lem:poincare}
On $I_2$ the weights $w$ and $w_2$ are bounded above and below by positive constants. Hence, for $g$ with $g(3\pi/2) = 0$ (or $g(\pi/2)=0$),
\begin{align}
    \|g\|_{L^2(w,\,I_2)}^2 &\leq C_\sigma\, \|g'\|_{L^2(w_2,\,I_2)}^2, \\
    |g(\pi/2) - g(3\pi/2)|^2 &\leq C_\sigma\, \|g'\|_{L^2(w_2,\,I_2)}^2.
\end{align}
\end{lemma}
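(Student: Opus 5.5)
The plan is to reduce both estimates to elementary one-dimensional inequalities with constant weights. First, check the claim about the weights. On $I_2 = [0,\pi/2]\cup[3\pi/2,2\pi]$ we have $\cos\theta \geq 0$, so $1 \leq 1+\cos\theta \leq 2$. Consequently $w = (1+\cos\theta)^{1/\sigma-1}$ and $w_2 = (1+\cos\theta)^{1/\sigma}$ lie between $\min(1,2^{1/\sigma-1})$ and $\max(1,2^{1/\sigma-1})$, respectively between $1$ and $2^{1/\sigma}$. This settles the first sentence of the statement. From here on I would write $m_\sigma, M_\sigma$ for the uniform lower and upper bounds of both weights on $I_2$. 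Every weighted norm on $I_2$ is then comparable to the unweighted $L^2$ norm, up to factors depending only on $\sigma$.

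Next I would identify $I_2$, as a subset of the torus, with the single arc $J = [3\pi/2, 5\pi/2]$ of length $\pi$. Its endpoints are $3\pi/2$ and $5\pi/2 \equiv \pi/2$. Because the arc is connected through $0 \equiv 2\pi$, a function vanishing at one endpoint can be controlled along the whole arc.

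For the jump estimate, apply the fundamental theorem of calculus along $J$ and then Cauchy--Schwarz:
\begin{align*}
|g(\pi/2)-g(3\pi/2)|^2 = \Big|\int_J g'\,\mathrm{d}\theta\Big|^2 \leq \pi \int_J |g'|^2\,\mathrm{d}\theta \leq \frac{\pi}{m_\sigma}\,\|g'\|^2_{L^2(w_2,I_2)}.
\end{align*}
This step needs no boundary condition.

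For the first inequality, suppose $g(3\pi/2)=0$; the case $g(\pi/2)=0$ is symmetric. For every $\theta\in J$ write $g(\theta)=\int_{3\pi/2}^{\theta} g'$. Cauchy--Schwarz gives $|g(\theta)|^2 \leq \pi \int_J |g'|^2$, and integrating over $J$ yields $\int_J |g|^2 \leq \pi^2 \int_J |g'|^2$. Inserting the weight bounds then gives
\begin{align*}
\|g\|^2_{L^2(w,I_2)} \leq \frac{\pi^2 M_\sigma}{m_\sigma}\,\|g'\|^2_{L^2(w_2,I_2)}.
\end{align*}
I would take $C_\sigma = \max(\pi^2 M_\sigma/m_\sigma,\ \pi/m_\sigma)$ so that one constant serves both inequalities.

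I expect no real obstacle here. The only point needing care is the topological one: $I_2$ must be treated as one arc through $0$ on the circle, not as two disjoint intervals. Otherwise the boundary condition at $3\pi/2$ would say nothing about $g$ on $[0,\pi/2]$, and the first inequality would fail, for instance for a $g$ equal to a nonzero constant there.
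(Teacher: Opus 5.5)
Your proof is correct and takes the same approach as the paper. The paper only sketches the argument: bound $w$ and $w_2$ above and below on $I_2$, then reduce to the unweighted fundamental-theorem-of-calculus and Cauchy--Schwarz estimate. You also make explicit, as the paper does not, that $I_2$ must be read as the single arc $[3\pi/2,5\pi/2]$ through $0\equiv 2\pi$, using periodicity of $g$, which is exactly what the later use of this lemma in the proof of the spectral gap theorem requires.
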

~\\

\begin{theorem}[Spectral gap]\label{thm:gap}
For $0 < \sigma < 1/2$, there exists $C_\sigma > 0$ such that for any smooth $2\pi$-periodic $f$ with $\int_0^{2\pi} f(\theta)\, w(\theta)\, \mathrm{d}\theta = 0$,
\begin{align}
    \|f\|_{L^2(w)}^2 \leq C_\sigma \int_0^{2\pi} (1+\cos\theta)^{1/\sigma}\, |f'(\theta)|^2\, \mathrm{d}\theta.
\end{align}
\end{theorem}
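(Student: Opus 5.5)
The plan is to reduce the weighted Poincaré inequality to the localized estimates of Lemmas~\ref{lem:hardy} and~\ref{lem:poincare}, applied to $f$ shifted by a single constant. Since $\int f w = 0$, we have $\|f\|_{L^2(w)}^2 \le \|f - c\|_{L^2(w)}^2$ for every constant $c$, the mean being the $L^2(w)$-orthogonal projection onto constants. Note that $w$ is integrable for $\sigma<2$, because $1/\sigma - 1 > -1/2$ and $1+\cos\theta \sim (\theta-\pi)^2/2$ near $\pi$. So it suffices to bound $\|f-c\|_{L^2(w)}^2$ by $C_\sigma \|f'\|_{L^2(w_2)}^2$ for a well-chosen $c$. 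I would take $c = f(3\pi/2)$ and set $g = f - c$, so that $g(3\pi/2)=0$ and $g' = f'$.

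The torus splits as $I_1^+\cup I_1^-\cup I_2$, and I will bound $\|g\|^2_{L^2(w)}$ on each piece.

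\textbf{Piece $I_1^+$.} Lemma~\ref{lem:hardy} applies directly, since $g(3\pi/2)=0$. It gives $\|g\|^2_{L^2(w,I_1^+)} \le \frac{16\sigma^2}{(2-\sigma)^2}\|f'\|^2_{L^2(w_2,I_1^+)}$.

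\textbf{Piece $I_2$.} Lemma~\ref{lem:poincare} with $g(3\pi/2)=0$ gives $\|g\|^2_{L^2(w,I_2)}\le C_\sigma\|f'\|^2_{L^2(w_2,I_2)}$. Here $I_2$ is the arc from $3\pi/2$ through $0\equiv 2\pi$ to $\pi/2$, and periodicity of $f$ is used to view it as one interval.

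\textbf{Piece $I_1^-$.} Here Lemma~\ref{lem:hardy} needs the function to vanish at $\pi/2$, so I write $g = (f - f(\pi/2)) + (f(\pi/2)-f(3\pi/2))$. The first term is handled by the $I_1^-$ version of Lemma~\ref{lem:hardy}. For the constant term, the second estimate of Lemma~\ref{lem:poincare} gives $|f(\pi/2)-f(3\pi/2)|^2 \le C_\sigma \|f'\|^2_{L^2(w_2,I_2)}$. This is valid because the path through $I_2$ avoids the degenerate point $\pi$. Multiplying by $\int_{I_1^-} w < \infty$ and using $(a+b)^2\le 2a^2+2b^2$ bounds this piece.

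Summing the three contributions gives $\|f\|_{L^2(w)}^2 \le C_\sigma \int_0^{2\pi} w_2 |f'|^2$, with $C_\sigma$ depending on $\sigma$, the Hardy constant and the $I_2$ constants.

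The main obstacle is the degenerate point $\theta=\pi$, which is shared by $I_1^+$ and $I_1^-$. The Hardy lemmas are proved for functions smooth only on the open intervals and do not control any jump or behaviour across $\pi$. That is why the two half-intervals must be anchored at the outer endpoints $\pi/2$ and $3\pi/2$ and linked through $I_2$, not through $\pi$. Since $f$ is smooth, $f(\pi)$ is finite, so the boundary terms in Lemma~\ref{lem:hardy} vanish because $\phi(\pi)=0$, which holds for $\sigma<2$.

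Checking that the lemmas' hypotheses hold on the restrictions of $g$ is routine. The Hardy inequality needs only $\sigma<2$, so I do not expect the argument to need $\sigma<1/2$. I would keep the stated range and note that this proof appears to work for $\sigma<2$. The one remaining detail is to make the constants in Lemma~\ref{lem:poincare} explicit. On $I_2$ one has $1+\cos\theta\in[1,2]$, so by the fundamental theorem of calculus and Cauchy--Schwarz one can take $C_\sigma \lesssim \pi\cdot\max w/\min w_2$ times the arc length.
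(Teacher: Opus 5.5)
Your proof is correct and follows the same structure as the paper's: Hardy estimates on $I_1^{\pm}$ anchored at the outer endpoints $3\pi/2$ and $\pi/2$, with the Poincar\'e bound on $I_2$ linking $f(\pi/2)$ and $f(3\pi/2)$ through the non-degenerate arc rather than through $\pi$. The one difference is that you use the zero-mean hypothesis once, via $\|f\|_{L^2(w)}^2\le\|f-c\|_{L^2(w)}^2$ with $c=f(3\pi/2)$, which avoids the paper's separate bound on the point value $|f(\pi/2)|$ and its re-splitting of $\|f-f(\pi/2)\|_{L^2(w)}$; you are also right that the argument only uses $\sigma<2$.
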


\begin{proof}
We decompose the norm and subtract boundary values to create functions vanishing at the interface points:
\begin{align}
    \|f\|_{L^2(w)} &\leq \|f - f(3\pi/2)\|_{L^2(w,\,I_1^+)} \nonumber\\
    &\quad + \|f - f(\pi/2)\|_{L^2(w,\,I_1^-)} \nonumber\\
    &\quad + \|f\|_{L^2(w,\,I_2)} \nonumber\\
    &\quad + \sqrt{\|w\|_{L^1(I_1)}}\big(|f(3\pi/2)| + |f(\pi/2)|\big).
\end{align}
The first two terms are controlled by the Hardy inequality (Lemma~\ref{lem:hardy}). For the $I_2$ term, we write
\begin{align}
    \|f\|_{L^2(w,\,I_2)} &\leq \|f - f(3\pi/2)\|_{L^2(w,\,I_2)} \nonumber\\
    &\quad + \sqrt{\|w\|_{L^1(I_2)}}\,|f(3\pi/2)|,
\end{align}
and the Poincar\'{e} bound (Lemma~\ref{lem:poincare}) controls the first term. The same lemma bounds $|f(3\pi/2) - f(\pi/2)|$, so it remains to control a single point value, say $|f(\pi/2)|$.

The zero-mean condition gives
\begin{align}
    \|w\|_{L^1}\,|f(\pi/2)| &= \Big|\int_0^{2\pi}\big(f(\pi/2) - f(\theta)\big)\, w(\theta)\, \mathrm{d}\theta\Big| \nonumber\\
    &\leq \sqrt{\|w\|_{L^1}}\,\|f - f(\pi/2)\|_{L^2(w)}.
\end{align}
The right-hand side is split over $I_1^-$ (Hardy), $I_2$ (Poincar\'{e}), and $I_1^+$ where we use
\begin{align}
    &\|f - f(\pi/2)\|_{L^2(w,\,I_1^+)} \nonumber\\
    &\quad \leq \|f - f(3\pi/2)\|_{L^2(w,\,I_1^+)} \nonumber\\
    &\quad\quad + |f(3\pi/2) - f(\pi/2)|\,\sqrt{\|w\|_{L^1(I_1^+)}},
\end{align}
with both terms already estimated. Collecting all bounds yields the result.
\end{proof}

\begin{corollary}[Convergence of periodic observables]\label{cor:exp-conv}
    Recall that $\sigma = \frac{2\epsilon\eta}{1+2\epsilon\eta}$, so $0 < \sigma < 1/2$ is equivalent to $\epsilon\eta < 1/2$. Let $C_\sigma$ be the constant from Theorem~\ref{thm:gap}. For any smooth $2\pi$-periodic function $f$ with $\int_0^{2\pi} f(\theta)\, w(\theta)\, \mathrm{d}\theta = 0$, the Heisenberg evolution of $\oO_f = f(2\eta \oq)$ satisfies
    \begin{align}
        \|f_t\|_{L^2(w)} \leq e^{-\gamma_\sigma\, t}\, \|f\|_{L^2(w)},\, \gamma_\sigma = \frac{(\epsilon + 2\epsilon^2 \eta)\eta\,\sigma}{C_\sigma} = \frac{2\epsilon^2 \eta^2}{C_\sigma},
    \end{align}
    where $f_t$ denotes the function such that $e^{t\mathcal{L}^*}\oO_f = \oO_{f_t}$.

    More generally, for any smooth $2\pi$-periodic $f$, the observable $e^{t\mathcal{L}^*}\oO_f$ converges exponentially to $\bar f_w \cdot \oid$ at rate $\gamma_\sigma$, where $\bar f_w = \int_0^{2\pi} f\, w\, \mathrm{d}\theta \,/\, \int_0^{2\pi} w\, \mathrm{d}\theta$ is the weighted mean of $f$.
\end{corollary}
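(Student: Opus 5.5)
The plan is to reduce the Heisenberg evolution to a linear parabolic equation on periodic functions. We then run the standard energy (Grönwall) argument in $L^2(w)$, combining the Dirichlet-form identity of Lemma~\ref{lem:symmetry} with the Poincar\'e inequality of Theorem~\ref{thm:gap}.

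\textbf{Step 1: invariance of the class of observables.} By Eq.~\eqref{eq:periodic_obs}, $\mathcal L^*$ maps $\oO_f$ to $\oO_{g}$, where
\[
g = -(\epsilon+2\epsilon^2\eta)\eta\,\mathcal A f ,\qquad \sigma = \frac{4\epsilon^2\eta^2}{(\epsilon+2\epsilon^2\eta)\eta} = \frac{2\epsilon\eta}{1+2\epsilon\eta}.
\]
This uses $4\cos^2(\eta\oq)=2(1+\cos(2\eta\oq))$ to match the form~\eqref{eq:reduced-op}. We therefore make the ansatz $e^{t\mathcal L^*}\oO_f=\oO_{f_t}$, with $f_t$ solving
\[
\partial_t f_t = -c\,\mathcal A f_t,\qquad c=(\epsilon+2\epsilon^2\eta)\eta,\qquad f_0=f .
\]
Consistently with the formal standpoint of the paper, we take for granted that this degenerate parabolic PDE on the circle has a smooth periodic solution. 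Its coefficient $\sigma(1+\cos\theta)$ is nonnegative, so it is forward well-posed. By uniqueness of the Heisenberg evolution, this solution then gives $e^{t\mathcal L^*}\oO_f$.

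\textbf{Step 2: conservation of the weighted mean.} Lemma~\ref{lem:symmetry} applied with $f\equiv1$ gives $\langle 1,\mathcal A f_t\rangle_w=0$. Hence $\frac{d}{dt}\int f_t w\,d\theta=0$, so the zero weighted mean condition holds for all $t$. Theorem~\ref{thm:gap} is therefore applicable at every time.

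\textbf{Step 3: energy estimate.} By Lemma~\ref{lem:symmetry},
\[
\frac{d}{dt}\|f_t\|^2_{L^2(w)} = -2c\langle f_t,\mathcal A f_t\rangle_w = -2c\sigma\int_0^{2\pi}(1+\cos\theta)^{1/\sigma}|f_t'|^2\,d\theta .
\]
Theorem~\ref{thm:gap} bounds the right-hand side by $-\frac{2c\sigma}{C_\sigma}\|f_t\|^2_{L^2(w)}$. Grönwall's lemma then gives $\|f_t\|_{L^2(w)}\le e^{-\gamma_\sigma t}\|f\|_{L^2(w)}$ with $\gamma_\sigma=c\sigma/C_\sigma=2\epsilon^2\eta^2/C_\sigma$, as claimed.

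\textbf{Step 4: general $f$.} For a general smooth periodic $f$, write $f=\bar f_w+(f-\bar f_w)$. Constants are fixed by the dynamics, since $\mathcal L^*(\oid)=0$, equivalently $\mathcal A 1=0$. Linearity then gives $e^{t\mathcal L^*}\oO_f-\bar f_w\oid=\oO_{h_t}$, where $h_t$ is the evolution of the centered part and decays at rate $\gamma_\sigma$ in $L^2(w)$ by Step 3.

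The main obstacle is the first step, i.e.\ justifying that the evolved observable stays in the class $\oO_{f_t}$ with $f_t$ smooth. The PDE degenerates at $\theta=\pi$, where $1+\cos\theta=0$. Solving it could be bypassed by noting that the generated semigroup is the one associated with the closed nonnegative form of Lemma~\ref{lem:symmetry} on $L^2(w)$. With that formulation the energy identity in Step 3 holds in the weak sense, which is all the decay estimate requires. A secondary point is that the decay is only in $L^2(w)$. Any pointwise or operator-norm convergence claimed in the final sentence would need an extra argument, for example a regularizing bound for $f_t$. I would interpret the statement as convergence in $L^2(w)$.
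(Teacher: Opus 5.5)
Your argument is correct and is the energy/Gr\"onwall argument the paper leaves implicit (it states the corollary without a separate proof, as a consequence of Lemma~\ref{lem:symmetry} and Theorem~\ref{thm:gap}), apart from a harmless slip: after using $4\cos^2(\eta\oq)=2(1+\cos(2\eta\oq))$ the ratio is $\sigma=2\epsilon^2\eta^2/\bigl((\epsilon+2\epsilon^2\eta)\eta\bigr)$, not $4\epsilon^2\eta^2/\bigl((\epsilon+2\epsilon^2\eta)\eta\bigr)$, and it is the former that gives your correct final values of $\sigma$ and $\gamma_\sigma=c\sigma/C_\sigma$. Your caveat on the last sentence is also justified: since $\mathcal A f(\pi)=0$ for every smooth $f$, one has $f_t(\pi)=f(\pi)$ for all $t$, so convergence to $\bar f_w\,\oid$ can only be meant in $L^2(w)$, not in operator norm.
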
